\newcommand{\be}{\begin{equation}}
\newcommand{\ee}{\end{equation}}
\definecolor{pinegreen}{rgb}{0.0, 0.47, 0.44}
\renewcommand{\d}{\mbox{${\rm d}$}}
\newtheorem{theorem}{Theorem}
\theoremstyle{definition}
\theoremstyle{remark}
\newtheorem{remark}{Remark}
\newcommand{\eFE}{e^{(E)}_F}
\newcommand{\eVs}{e^{(V)}_\sigma}
\newcommand{\eV}{e^{(V)}}
\newcommand{\rstar}{\rho^*}
\title{\bf Energy of a  non-linear viscoelastic model\\ 
compatible with fractional relaxation} %
\author{
Andrea~Giusti$^{1,2}$
, Andrea~Mentrelli$^{2,3,4}$, and Tommaso~Ruggeri$^{2,3,5}$ 
\\
\\
$^1$ {\em Institute for Theoretical Physics, ETH Zurich}
\\
{\em Wolfgang-Pauli-Strasse 27, 8093 Zurich, Switzerland}
\\
\\
$^2${\em Alma Mater Research Center on Applied Mathematics (AM$^2$)}
		\\
{\em Via Saragozza 8, 40123 Bologna, Italy}
\\
\\
$^3${\em Department of Mathematics, University of Bologna}
	\\
	{\em Piazza di Porta San Donato 5, 40126 Bologna, Italy} 
\\
\\
$^4${\em Istituto Nazionale di Fisica Nucleare (I.N.F.N.), Sezione di Bologna, I.S. FLAG}
	\\
	{\em Viale Berti Pichat 6/2, 40127 Bologna, Italy}
\\
\\
$^5${\em Accademia Nazionale dei Lincei, Rome, Italy}
}
\begin{document}
\maketitle
\begin{abstract}
Recently, a non-linear model of viscoelasticity based on Rational Extended Thermodynamics was proposed in [T. Ruggeri, Int. J. Non-Linear Mech. 160, 104658 (2024)]. This theory extends the evolution of the viscous stress beyond the linear framework of the Maxwell model to the non-linear realm, provided that the viscous energy function is given.
This work aims at establishing a possible constitutive law for the viscous energy such that the relaxation modulus of the fractional Maxwell model of order $\alpha \in (1/2, 1]$ is contained within the solutions of the (non-linear) relaxation experiment. Necessary and sufficient conditions for the existence of this coincident solution are discussed, together with a numerical evaluation of the viscous energy associated with the nonlinear model.
\end{abstract}

\newpage

\section{Introduction} 
Viscoelasticity is an intriguing research topic at the crossroad between applied mathematics 
and engineering, with many practical applications in material science \cite{Christensen, Amabili}.
The {\em Maxwell model} is the prototypical example of a linear viscoelastic model with exponential relaxation \cite{Christensen}. However, with the aim of describing anomalous materials, while preserving the linearity of the constitutive laws, many generalisation of standard models of linear viscoelasticity involving fractional derivatives have been proposed over the past few decades (see, {\em e.g.}, \cite{MainardiBook} and references therein). These non-local modifications lead to memory functions displaying a power-law decay, rather than an exponential one, thus entailing the emergence of long-memory effects \cite{MainardiBook}.  

In \cite{ViscoRuggeri}, T.~Ruggeri introduced a local non-linear viscoelastic model within the framework of Rational Extended Thermodynamics (RET) \cite{RET, beyond, newbook}. This model is uni-axial (thus effectively one-dimensional in space) and assumes that the considered process is isothermal. The resulting differential system, obtained by means of the universal principles of RET, reads \cite{ViscoRuggeri}:
\begin{align}
\label{nnelastvisco}
\begin{split}
 & \rho^* \frac{\partial v}{\partial t} - \frac{\partial }{\partial X}(T(F) + \sigma) = \rho^* b,   \\
 & \frac{\partial F}{\partial t} - \frac{\partial v}{\partial X} = 0, \\
 & \frac{\partial }{\partial t} (Z(\sigma)-F)  = -\frac{\sigma}{\mu(F)},  
  \end{split}
\end{align}
where
\begin{equation*}
    T(F) = \rstar \eFE(F), \quad  \quad Z(\sigma)= \int \rstar \frac{\eVs(\sigma)}{\sigma} d\sigma \, ,
\end{equation*}
adopting the notation according to which a subscript variable denotes differentiation with respect to the corresponding variable. Note that $\rho^*$ denotes the  mass density in the  reference frame, $v$ is the velocity, $T$ denotes the first Piola-Kirchhoff elastic stress tensor, $\sigma$ represents the viscous stress, $F$ denotes the  deformation gradient, $b$ is the external body force, and, lastly,  $\mu$ represents the viscous coefficient. Furthermore, $e(F,\sigma)$ is the internal energy, which splits into the sum
of an elastic part $e^{(E)}(F)$ and a viscous part $e^{(V)}(\sigma)$.

Any solution of \eqref{nnelastvisco} satisfies also the supplementary energy balance equation
 \begin{equation*}
     \rho^* \frac{\partial }{\partial t}\left(\frac{v^2}{2} + e(F,\sigma)\right) - \frac{\partial }{\partial X}\Big((T(F) + \sigma)v\Big) - \rho^* b \, v = \mathcal{E} =-\frac{\sigma^2}{\mu(F)} \leq 0 \, ,
 \end{equation*}
that in the isothermal case corresponds to the \emph{entropy principle}.

The system in Eq.~\eqref{nnelastvisco} is  symmetric hyperbolic   provided that the following inequalities are satisfied \cite{ViscoRuggeri}:
\begin{eqnarray}\label{dis}
	e^{(E)}_{FF}(F) >0 \, , \qquad \frac{\eVs(\sigma)}{\sigma} >0 \, , \qquad \mu(F) \geq 0 \, .
\end{eqnarray}
In one spatial dimension all quantities are scalars and the deformation gradient reduces to 
\begin{equation*}
	F=\sqrt{1+2\varepsilon} \, ,
\end{equation*}
where $\varepsilon$ is the deformation,
as detailed in \cite{ViscoRuggeri}. The last line in Eq.~\eqref{nnelastvisco} can therefore be rewritten, for classical solutions, as
\begin{equation}\label{Eq:Ruggeri}
\frac{\sigma}{\mu (\varepsilon )} + \rho^\ast \frac{\eVs(\sigma)}{\sigma} \dot{\sigma} = \frac{\dot{\varepsilon }}{\sqrt{1+ 2 \varepsilon }} \, ,
\end{equation}
where the dot denotes the derivative with respect to time. Notably, Eq.~\eqref{Eq:Ruggeri} looks like a non-linear modification of the standard Maxwell model, equipped with a  ``{\em non-linear relaxation time}''
\begin{equation}\label{tautau}
    \tau(\varepsilon ,\sigma) := \rho^\ast \mu(\varepsilon ) \, \frac{\eVs(\sigma)}{\sigma} 
\end{equation}
which is entirely determined, except for the viscous coefficient $\mu(\varepsilon )$, upon identification of the viscous energy $\eV(\sigma)$. 

As pointed out in \cite{ViscoRuggeri}, it is now crucial to compare the properties of the model given in \eqref{nnelastvisco}, with particular regard for Eq.~\eqref{Eq:Ruggeri}, with other well-established models in the literature as well as against experimental data. This procedure will allow one to classify the set of viscoelastic models compatible with the proposed local non-linear theory in \eqref{Eq:Ruggeri} and, consequently, determine the associated viscous energy within this general scheme.

In this work, we investigate for which viscous energy constitutive equation $\eV(\sigma)$ the \emph{relaxation modulus}, {\em i.e.}, the solution of the stress relaxation experiment, for a given constant strain $\varepsilon (t) = \varepsilon_0$ of \eqref{Eq:Ruggeri}, of the proposed non-linear model is also a solution of the linear fractional Maxwell model, as defined by the relation \cite{MainardiSpada}:
\be
\label{eq:fracmax}
\sigma + \tau_0^\alpha \, D^\alpha \sigma = b\, D^\alpha \varepsilon  \, , \qquad \alpha \in (0,1) \, ,
\ee
where $\tau_0$ and $b$ are real dimensional constants and $D^\alpha$ denotes the Caputo fractional derivative with respect to time.

\section{Relaxation modulus and viscous energy}
The relaxation modulus for the local non-linear model in Eq.~\eqref{Eq:Ruggeri} is the solution of
\be \label{nonlineare}
 \sigma  + \rho^\ast {\mu (\varepsilon _0)}\frac{e^{(V)}_\sigma (\sigma)}{\sigma}\, \dot{\sigma} = 0 \, .
\ee
For the linear fractional Maxwell model of order $\alpha \in (0,1)$, the relaxation modulus is a solution of
\be \label{frazione}
\sigma + \tau_0^\alpha \, D^\alpha \sigma = 0  \, , \qquad \alpha \in (0,1) \, ,
\ee 
with initial condition $\sigma(0) = \sigma_0$,  which is given by
\begin{equation}
    \label{eq:fracrelax}
   \sigma (t) = \sigma_0 \, E_\alpha \big[ - (t/\tau_0)^\alpha \big] \, ,
\end{equation}
where 
$$
E_\beta (z) := \sum_{k=0}^\infty \frac{z^k}{\Gamma (\beta \, k + 1)} \, , \quad \beta > 0 \, , \quad z \in \mathbb{C} \, ,
$$ 
denotes the {\em Mittag-Leffler function} \cite{MainardiML}, and $\Gamma (z)$ represents Euler's Gamma function. 

Recalling that the function $E_\alpha (-t^\alpha)$ is {\em positive} on $\mathbb{R}^+$, {\em completely monotonic} (see \cite{MainardiML}), and such that
$$
\frac{\d}{\d t} \big( E_\alpha (-t^\alpha) \big) = - t^{\alpha - 1} \, E_{\alpha, \alpha} (-t^\alpha) \, ,
$$
where
$$
E_{\beta, \delta} (z) := \sum_{k=0}^\infty \frac{z^k}{\Gamma (\beta \, k + \delta)} \, , \quad {\rm Re} (\beta) > 0 \, , \quad \delta \in \mathbb{C} \, , \quad z \in \mathbb{C} \, ,
$$ 
denotes the {\em two-parameter Mittag-Leffler function}\footnote{For a comprehensive literature review on Mittag-Leffler functions and their role in fractional caluclus we refer the reader to the monograph by R. Gorenflo, A. A. Kilbas, {\em et al.} \cite{MLsBook}.}, we can prove the following: 

\smallskip 

\begin{theorem} \label{thm-1}
    Consider the family of constitutive equations $\eV(\sigma)$ depending on $\alpha$ and expressed in the following parametric form:
    \begin{align}\label{teor1-1}
    \begin{split}
    e^{(V)}(s) &= 
    e_0 - \frac{k_0^2}{\rho^*\mu(\varepsilon_0)}\int_0^s \Big(E_\alpha \big[-(\bar{s}/\tau_0)^\alpha \big]\Big)^2 \, d \bar{s} \, ,  \\
    \sigma(s) & = k_0 \, E_\alpha \big[-({s}/\tau_0)^\alpha \big] \, , 
    \end{split}
    \end{align}  
    with $s\geq 0$, $e_0$ being a real inessential constant, and $k_0$ a structural constant of the material. Then, there exists a solution $\sigma(t) = k_0  \, E_\alpha \big[ - (t/\tau_0)^\alpha \big]$ of both Eqs.~\eqref{nonlineare} and \eqref{frazione} if we choose as initial condition $\sigma(0) = k_0 $. 
    The viscous energy $\eV(\sigma)$ is bounded $\forall \sigma \in [0,k_0]$ if and only if $E_\alpha[-(s/\tau_0)^\alpha] \in L^2\big( [0,\infty) \big)$, namely, if and only if $\alpha \in (1/2, 1)$. 
    The non-linear relaxation time defined in Eq.~\eqref{tautau}, computed assuming $\varepsilon  (t) = \varepsilon _0$, is positive for all $\sigma \in (0, k_0 )$, diverges at $\sigma = 0$, and vanishes at $\sigma = k_0 $.
    The constant $e_0$ can be chosen to be equal to
    \begin{equation}\label{e00}
    e_0 = \frac{k_0 ^2}{\rho^*\mu(\varepsilon_0)}\int_0^\infty \Big(E_\alpha \big[-(\bar{s}/\tau_0)^\alpha \big]\Big)^2 \, d \bar{s}, 
    \end{equation}
    so that
    $$\lim_{\sigma \rightarrow  0} e^{(V)}(\sigma) =0 \, . $$
\end{theorem}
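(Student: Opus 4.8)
The plan is to verify the theorem claim by claim, treating $s$ as a parameter along the relaxation curve. First I would check that the parametric pair in \eqref{teor1-1} actually solves \eqref{nonlineare}. Differentiating $\sigma(s)=k_0 E_\alpha[-(s/\tau_0)^\alpha]$ with respect to $s$ gives, by the stated derivative formula, $\sigma'(s) = -k_0 (s^{\alpha-1}/\tau_0^\alpha)\,E_{\alpha,\alpha}[-(s/\tau_0)^\alpha]$, while differentiating $e^{(V)}(s)$ gives $e^{(V)}_s(s) = -\bigl(k_0^2/(\rho^*\mu(\varepsilon_0))\bigr)\bigl(E_\alpha[-(s/\tau_0)^\alpha]\bigr)^2 = -\bigl(1/(\rho^*\mu(\varepsilon_0))\bigr)\sigma(s)^2$. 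Hence $\eVs(\sigma)/\sigma = e^{(V)}_s(s)/\sigma(s)\cdot(1/\sigma'(s))\cdot\sigma'(s)$... more cleanly: using the chain rule $\eVs(\sigma) = e^{(V)}_s(s)/\sigma'(s)$, so $\rho^*\mu(\varepsilon_0)\,\eVs(\sigma)/\sigma = -\sigma(s)/\sigma'(s)$, and therefore $\sigma + \rho^*\mu(\varepsilon_0)(\eVs(\sigma)/\sigma)\,\dot\sigma = 0$ is satisfied along the curve with the time identification $\dot\sigma = \sigma'(s)$, i.e.\ $s=t$. Simultaneously $\sigma(t)=k_0 E_\alpha[-(t/\tau_0)^\alpha]$ solves \eqref{frazione} with $\sigma(0)=k_0$ by \eqref{eq:fracrelax}. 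This establishes the first sentence.

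Next I would address boundedness of $\eV$ on $[0,k_0]$. Since $E_\alpha[-(s/\tau_0)^\alpha]$ is positive and completely monotone, it decreases from $1$ at $s=0$ to $0$ as $s\to\infty$, so as a function of $s\in[0,\infty)$ the stress $\sigma(s)$ sweeps out $(0,k_0]$ monotonically; thus $\sigma\in[0,k_0]$ corresponds to $s\in[0,\infty]$. The energy $e^{(V)}(\sigma(s)) = e_0 - \bigl(k_0^2/(\rho^*\mu(\varepsilon_0))\bigr)\int_0^s (E_\alpha[-(\bar s/\tau_0)^\alpha])^2\,d\bar s$ is then bounded on this range precisely when the integral converges as $s\to\infty$, i.e.\ when $E_\alpha[-(\cdot/\tau_0)^\alpha]\in L^2([0,\infty))$. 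The asymptotic $E_\alpha(-t^\alpha)\sim (t^\alpha\Gamma(1-\alpha))^{-1}$ as $t\to\infty$ (standard for $\alpha\in(0,1)$, citable from \cite{MainardiML} or \cite{MLsBook}) shows the square behaves like $t^{-2\alpha}$ at infinity, which is integrable iff $2\alpha>1$, i.e.\ $\alpha\in(1/2,1)$; near $0$ the integrand is bounded, so there is no issue there. This gives the iff statement.

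For the non-linear relaxation time, from \eqref{tautau} with $\varepsilon=\varepsilon_0$ we have $\tau(\varepsilon_0,\sigma) = \rho^*\mu(\varepsilon_0)\,\eVs(\sigma)/\sigma = -\sigma(s)/\sigma'(s)$ as computed above. Since $\sigma(s)>0$ and $\sigma'(s)<0$ for $s\in(0,\infty)$ (because $E_{\alpha,\alpha}(-t^\alpha)>0$, again a standard complete-monotonicity fact), $\tau$ is positive on $\sigma\in(0,k_0)$. As $\sigma\to k_0$, i.e.\ $s\to 0^+$, the factor $s^{\alpha-1}$ in $\sigma'(s)$ blows up (for $\alpha<1$) while $\sigma(s)\to k_0$ stays finite, so $\tau\to 0$. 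As $\sigma\to 0$, i.e.\ $s\to\infty$, using the asymptotics $\sigma(s)\sim c\,s^{-\alpha}$ and $\sigma'(s)\sim -c\alpha\,s^{-\alpha-1}$ gives $-\sigma/\sigma'\sim s/\alpha\to\infty$, so $\tau$ diverges at $\sigma=0$. Finally, for the normalization constant: $\lim_{\sigma\to 0} e^{(V)}(\sigma) = e_0 - \bigl(k_0^2/(\rho^*\mu(\varepsilon_0))\bigr)\int_0^\infty (E_\alpha[-(\bar s/\tau_0)^\alpha])^2\,d\bar s$, which vanishes exactly for the choice \eqref{e00}; this integral is finite precisely in the regime $\alpha\in(1/2,1)$ already identified.

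The main obstacle is not any single step but making the time-identification argument rigorous: one must be careful that the parametric solution genuinely corresponds to a classical solution of the ODE \eqref{nonlineare}, checking that $\sigma(s)$ is invertible (monotonicity of $E_\alpha$), that $\eVs$ is well-defined as a function of $\sigma$ via this inverse, and that substituting $s=t$ is consistent with both $\dot\sigma = \sigma'(t)$ and the initial data. A secondary technical point is pinning down the precise power-law asymptotics of $E_\alpha(-t^\alpha)$ and $E_{\alpha,\alpha}(-t^\alpha)$ with enough care to justify both the $L^2$ dichotomy at $\alpha=1/2$ and the divergence/vanishing of $\tau$ at the endpoints; these are classical but should be cited explicitly.
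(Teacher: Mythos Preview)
Your proposal is correct and follows essentially the same route as the paper: the chain-rule computation $e^{(V)}_\sigma/\sigma = -\sigma(s)/(\rho^*\mu(\varepsilon_0)\,\sigma'(s))$, the $L^2$ dichotomy via the large-$s$ asymptotic $E_\alpha(-x^\alpha)\sim x^{-\alpha}/\Gamma(1-\alpha)$, and the endpoint behaviour of $\tau$ from the asymptotics of $\sigma$ and $\sigma'$ all match the paper's argument. The only cosmetic difference is that the paper substitutes into \eqref{nonlineare} to deduce $\sigma_s/\sigma=\dot\sigma/\sigma$ and hence $s=t+c$, then sets $c=0$ by matching $\sigma(0)=k_0$ (a step that also seeds Theorem~\ref{thm-2}), whereas you verify $s=t$ directly; for the existence claim in Theorem~\ref{thm-1} your shortcut is perfectly adequate.
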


\smallskip

\begin{proof}
First, for \eqref{teor1-1} we observe that for $\eV(s)$ to be finite for all $s \in [0, +\infty)$ one needs $E_\alpha[-(s/\tau_0)^\alpha] \in L^2\big( [0,\infty) \big)$. Recalling the asymptotic behaviour of the the Mittag-Leffler function \cite{MainardiML}, {\em i.e.},
\be \label{asym}
E_\alpha(-x^\alpha)  \sim 
\left\{ 
\begin{aligned}
& \exp \left[- x^\alpha/\Gamma (\alpha + 1) \right] \, , \,\, \mbox{as} \,\, x \to 0^+ \, ,\\
& x^{-\alpha}/\Gamma (1-\alpha) \, , \,\,\, \mbox{as}\,\, x \to + \infty \, , 
\end{aligned}
\right. \sim 
\left\{ 
\begin{aligned}
& 1 - x^\alpha/\Gamma (\alpha + 1) \, , \,\, \mbox{as} \,\, x \to 0^+ \, ,\\
& x^{-\alpha}/\Gamma (1-\alpha) \, , \,\,\, \mbox{as}\,\, x \to + \infty \, ,
\end{aligned}
\right.
\ee
one can easily conclude that $E_\alpha[-(s/\tau_0)^\alpha] \in L^2\big( [0,\infty) \big)$ if and only if $\alpha \in (1/2, 1)$.

Then, fixing $\alpha \in (1/2, 1)$, from Eq.~\eqref{teor1-1} we have that
$$
{e}^{(V)} _s (s) = - \frac{\sigma^2(s)}{\rho^\ast \mu (\varepsilon _0)} \, ,
$$
and hence, by means of the chain rule, we find
$$
{e}^{(V)} _\sigma  (\sigma(s)) = - \frac{\sigma^2(s)}{\rho^\ast \mu (\varepsilon _0) \, \sigma_s (s)} \, ,
$$
which implies
\be
\label{eq:intermedia-1}
\frac{e^{(V)} _\sigma (\sigma (s))}{\sigma (s)} = - \frac{\sigma (s)}{\rho^\ast \mu (\varepsilon _0) \, \sigma_s (s)} \, .
\ee
Inserting \eqref{eq:intermedia-1} into Eq.~\eqref{nonlineare} we get
\begin{equation*} \label{intermediate}
  \frac{\sigma_s (s)}{\sigma (s)} = \frac{\dot{\sigma} (t)}{\sigma (t)} \, , 
\end{equation*}
thus implying that $s = t + c$, with $c \in \mathbb{R}$. Therefore from  \eqref{teor1-1}$_2$, the solution of \eqref{nonlineare} with inital data $\sigma(0)=\sigma_0$ is
\be \label{sR}
\sigma(t) = k_0 \, E_\alpha \left[ -\left(\frac{t+c}{\tau_0}\right)^\alpha \right] \, \quad \text{with} \,\, c \,\,\text{solution of} \quad E_\alpha \left[ -\left(\frac{c}{\tau_0}\right)^\alpha \right]= \frac{\sigma_0}{k_0} .
\ee
If we then compare the latter with \eqref{eq:fracrelax} we conclude that the two solutions coincide if and only if $c=0$ and therefore  only if we choose  the initial condition $\sigma_0 = k_0$.

Second, since $E_\alpha (-t^\alpha)$ is positive and strictly decreasing on $\mathbb{R}^+$ we can conclude that
$$
\frac{\tau(\varepsilon_0 ,\sigma(s))}{\rho^\ast \mu(\varepsilon_0)} = \frac{\eVs(\sigma(s))}{\sigma(s)} = \frac{\tau_0}{\rho^\ast \, \mu (\varepsilon _0)}
\frac{E_\alpha \big[ - (s/\tau_0)^\alpha \big]}{(s/\tau_0)^{\alpha - 1} \, E_{\alpha, \alpha} \big[-(s/\tau_0)^\alpha\big]}
$$
is positive and finite for any $s \in (0, +\infty)$,  while it vanishes at $s=0$.
Equivalently from \eqref{teor1-1}$_2$, we have shown that $\tau(\varepsilon_0 ,\sigma)$ is positive and finite for any $\sigma \in (0, k_0 )$ and that it vanishes at $\sigma = k_0 $. 

Lastly, if one recalls that \cite{MainardiML} 
$$
x^{\alpha - 1} \, E_{\alpha, \alpha} \big(-x^\alpha\big) \sim \frac{\sin (\alpha \pi) \, \Gamma (\alpha +1)}{\pi} \, x^{-\alpha -1} \, , \quad \mbox{as} \,\, x \to +\infty \, ,
$$
then
$$
\frac{\tau(\varepsilon_0 ,\sigma(s))}{\rho^\ast \mu(\varepsilon_0)} = \frac{\eVs(\sigma(s))}{\sigma(s)} \sim  \frac{1}{\alpha} \frac{s}{\rho^\ast \mu(\varepsilon_0)}  \quad \mbox{as} \,\, s \to +\infty \, ,
$$
or, equivalently,
$$
\lim_{\sigma \to 0}
\frac{\tau(\varepsilon_0 ,\sigma )}{\rho^\ast \mu(\varepsilon_0)} = 
\lim_{\sigma \to 0} \frac{\eVs(\sigma)}{\sigma} = +\infty 
$$
and the proof is completed.
\end{proof}

\begin{remark}
It is important to observe that the convexity condition \eqref{dis}$_2$, corresponding to $\tau(\varepsilon_0 , \sigma) > 0$, is valid on a finite interval $\sigma \in (0, k_0) $. This is not surprising since in many physical scenarios convexity is known to hold only for a subset of the values of the fields.
\end{remark} 
\begin{remark}
Let us consider the case $\alpha=1$, that corresponds to the ordinary Maxwell model. By definition $E_1 (z) = {\rm e}^z$, hence from \eqref{teor1-1} we find that
$$
e^{(V)}(s) = \frac{\tau_0}{2 \rho^\ast \mu (\varepsilon_0)} \, k_0 ^2 \, {\rm e}^{-\frac{2 s}{\tau_0}} \, , \qquad \sigma(s) = k_0 \, {\rm e}^{-\frac{s}{\tau_0}} \, ,
$$
from which it follows that
$$
e^{(V)}(\sigma) = \frac{\tau_0}{2 \rho^\ast \mu (\varepsilon_0)} \, \sigma^2 \, .
$$
Then we can easily compute the associated non-linear relaxation time that yields 
$$
\tau(\varepsilon_0 ,\sigma) = \rho^\ast \mu(\varepsilon_0) \, \frac{\eVs(\sigma)}{\sigma} = \tau_0 > 0 \, ,
$$
that coincides with the standard relaxation constant time of the linear model, as expected, and both equations \eqref{Eq:Ruggeri} and \eqref{eq:fracmax} reduced with $b=\mu$ to the linear Maxwell model.  
\end{remark} 

\smallskip

In Fig.~\ref{Figura1} we plot the non-dimensional non-linear relaxation time
and the non-dimensional viscous energy as functions of $\sigma / k_0$ choosing the   constant $e_0$ as in Eq.~\eqref{e00}.

\begin{figure}
    \centering
    \includegraphics[width=0.49\linewidth]{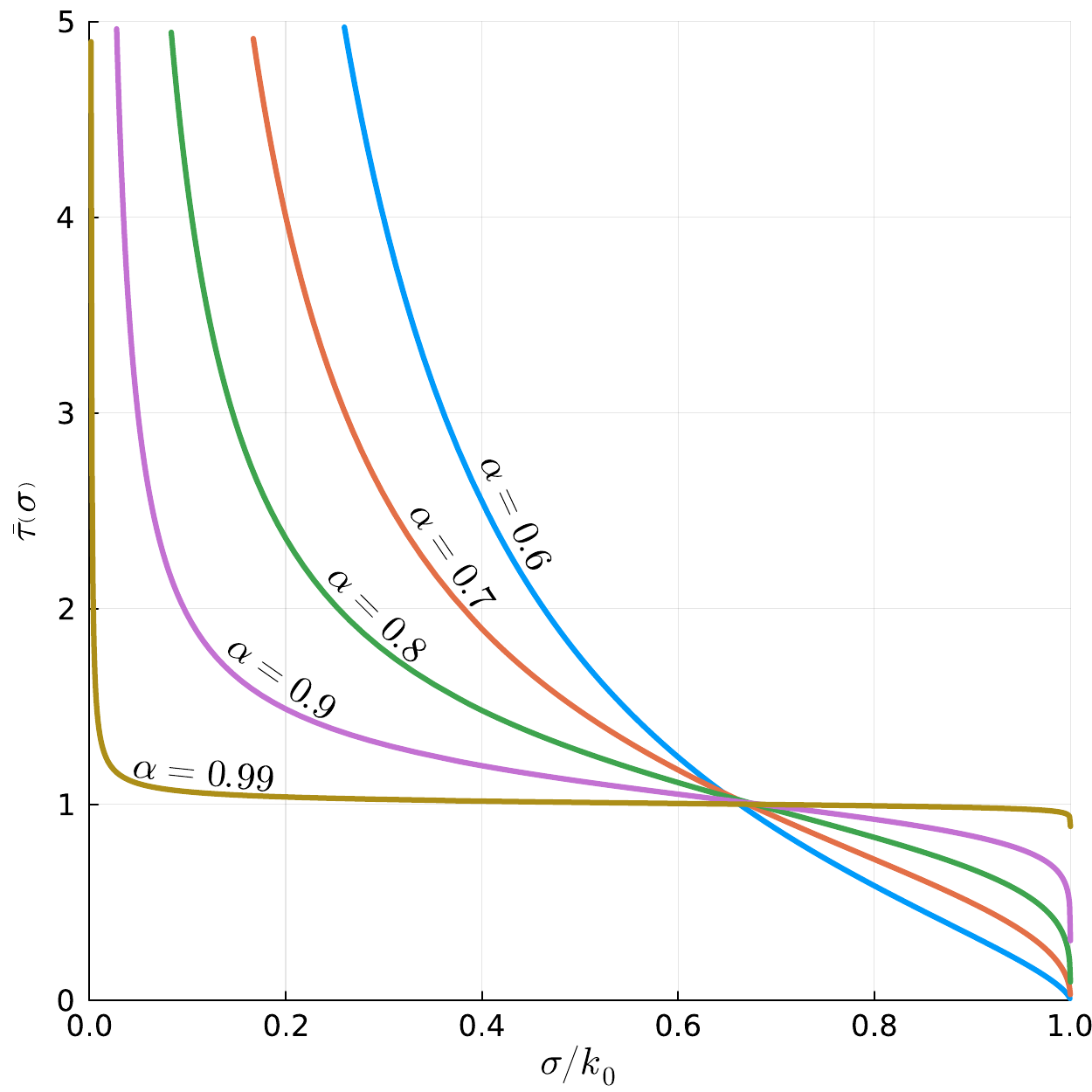}%
    \includegraphics[width=0.49\linewidth]{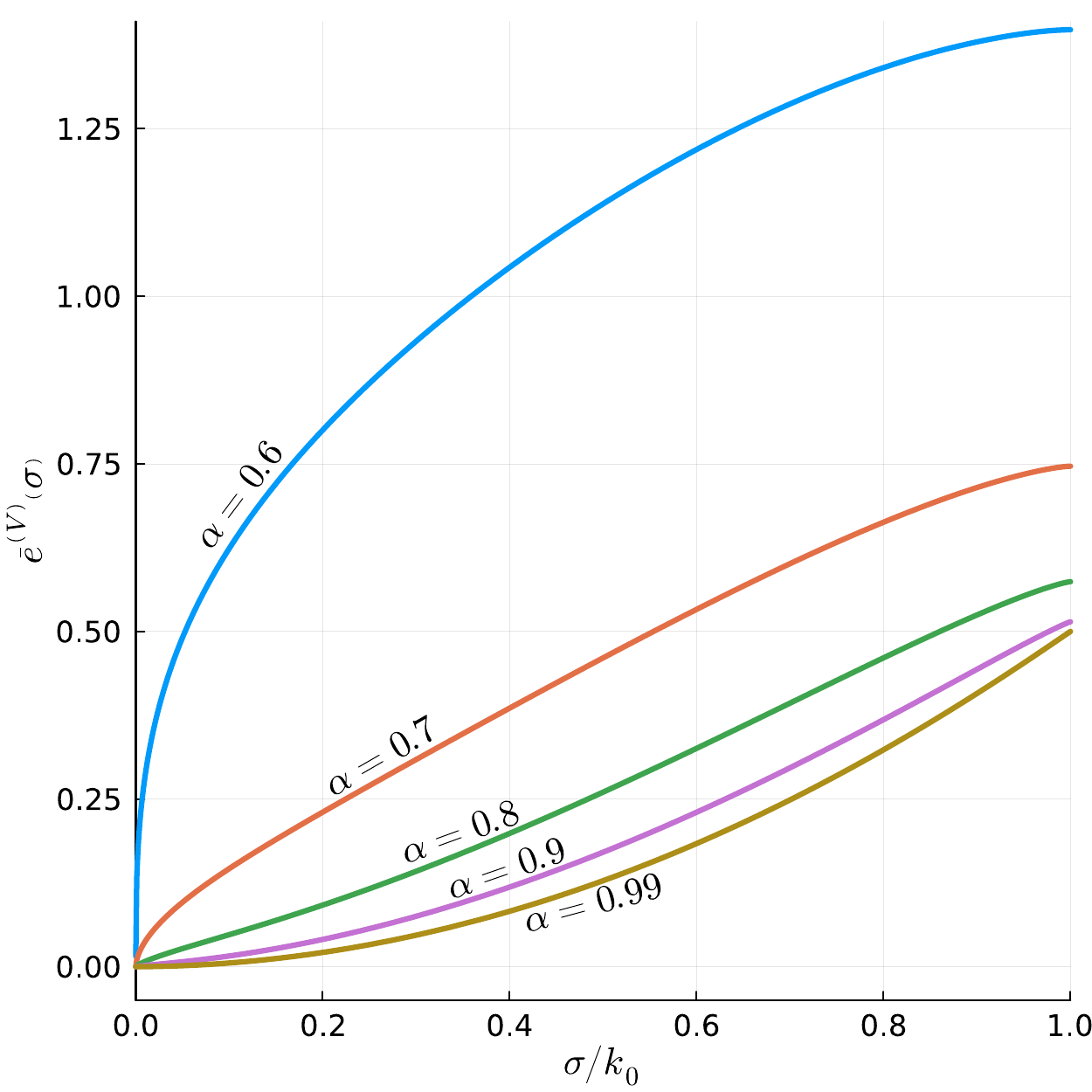}
    \caption{Non-dimensional non-linear relaxation time $\bar{\tau}(\sigma) = \tau/\tau_0$ (left panel) and non-dimensional viscous energy $\bar{e}^{(V)}\left(\sigma\right )= \left[\rho^\ast \mu \left(\varepsilon_0\right)/\left(\tau_0 k_0^2\right)\right] \, e^{(V)}(\sigma)$
    (right panel).}
    \label{Figura1}
\end{figure}
\smallskip

Note that {\bf Theorem \ref{thm-1}} addresses the existence of a viscous energy \eqref{teor1-1} that leads to a coincident solution of both RET-improved viscoelasticity and the fractional Maxwell model. Then, it sets necessary and sufficient conditions for the existence of such a solution, among which one finds that the initial condition for the relaxation experiment of fractional Maxwell model \eqref{eq:fracrelax} must coincide with the structural constant of the material, {\em i.e.}, $\sigma(0) = k_0$.

However, this is not the only solution to the relaxation experiment associated with the constitutive equation \eqref{teor1-1}. In fact if $\sigma_0\neq k_0$ we can show that:
\begin{theorem} \label{thm-2}
Let $\alpha\in(1/2, 1)$ and $c>0$. There exist solutions of the non-linear local RET model Eq.~\eqref{nonlineare} with initial data $\sigma(0)=\sigma_0 \in (0, k_0)$ of the form given in \eqref{sR} that we rewrite as
\be
\label{sol-RET}
\sigma_R(t) = k_0 \, E_\alpha \left[ -\left(\frac{t+c}{\tau_0}\right)^\alpha \right] \, \quad \text{with} \,\, c \,\,\text{solution of} \quad E_\alpha \left[ -\left(\frac{c}{\tau_0}\right)^\alpha \right]= \frac{\sigma_0}{k_0} .
\ee
While the solution of the fractional equation \eqref{frazione} is given in \eqref{eq:fracrelax} that we rewrite as:
\begin{equation}\label{sol-F}
   \sigma_F (t) = \sigma_0 \, E_\alpha \big[ - (t/\tau_0)^\alpha \big] \, .
\end{equation}
The two solutions \eqref{sol-RET} and \eqref{sol-F} are different (except at initial time) but the solution of the non-linear local RET  model has lower and upper bounds that depends on the solution of the fractional equation according the following inequalities:
    \be \label{upper}
    \sigma_{\rm F} (t) < \sigma_{\rm R} (t) < \frac{k_0}{\sigma_0} \, \sigma_{\rm F} (t) \, , \quad \forall t > 0 \,  . 
    \ee
\end{theorem}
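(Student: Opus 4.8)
\emph{Proof proposal.} The plan is to reduce both formulas to a single scalar function and then to exploit the log-convexity of the Mittag--Leffler function. Write
$$
g(t) := E_\alpha\!\big[-(t/\tau_0)^\alpha\big] ,
$$
so that $\sigma_{\rm F}(t) = \sigma_0\, g(t)$ and $\sigma_{\rm R}(t) = k_0\, g(t+c)$, while the relation defining $c$ in \eqref{sol-RET} reads simply $g(c) = \sigma_0/k_0$. Since $g$ is continuous and strictly decreasing on $[0,\infty)$ with $g(0)=1$ and $g(t)\to 0$ as $t\to+\infty$ (the asymptotics recalled in \eqref{asym}), and since $\sigma_0/k_0\in(0,1)$, there is a unique such $c>0$; moreover $\sigma_{\rm F}(0)=\sigma_{\rm R}(0)=\sigma_0$, which is the only point where the two functions agree once the strict inequalities below are established.

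For the right-hand inequality in \eqref{upper} I would just use monotonicity: as $g$ is strictly decreasing and $c>0$, one has $g(t+c)<g(t)$ for all $t\ge 0$, hence
$$
\sigma_{\rm R}(t) = k_0\, g(t+c) < k_0\, g(t) = \frac{k_0}{\sigma_0}\,\sigma_0\, g(t) = \frac{k_0}{\sigma_0}\,\sigma_{\rm F}(t) .
$$
For the left-hand inequality, note that $\sigma_{\rm F}(t)<\sigma_{\rm R}(t)$ is, after dividing by $k_0>0$ and using $\sigma_0/k_0=g(c)$, equivalent to the strict super-multiplicativity
$$
g(t+c) > g(t)\,g(c) , \qquad t>0,\ c>0 .
$$
The key input is that $E_\alpha(-t^\alpha)$ is completely monotone on $[0,\infty)$ — as recalled in the proof of Theorem~\ref{thm-1} — a property unaffected by the positive rescaling $t\mapsto t/\tau_0$, so $g$ is completely monotone, hence log-convex. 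It is in fact \emph{strictly} log-convex: by Bernstein's theorem $g(t)=\int_0^\infty e^{-ts}\,d\mu(s)$ for a positive measure $\mu$, and if $\mu$ were a Dirac mass then $g$ would be a pure exponential, contradicting the power-law decay $g(t)\sim (t/\tau_0)^{-\alpha}/\Gamma(1-\alpha)$ from \eqref{asym}; consequently the Cauchy--Schwarz inequality $g(t_0)g''(t_0)-g'(t_0)^2\ge 0$ is strict at every $t_0>0$, i.e.\ $h:=\log g$ satisfies $h''>0$ on $(0,\infty)$. Taking in the strict convexity inequality the point $t+c$ with weights $\tfrac{t}{t+c}$ and $\tfrac{c}{t+c}$ against the point $0$ gives $h(t)<\tfrac{t}{t+c}h(t+c)$ and $h(c)<\tfrac{c}{t+c}h(t+c)$; adding these and using $h(0)=0$ yields $h(t)+h(c)<h(t+c)$, that is $g(t)g(c)<g(t+c)$. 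Multiplying by $k_0$ and recalling $k_0 g(c)=\sigma_0$ gives $\sigma_{\rm F}(t)<\sigma_{\rm R}(t)$ for all $t>0$, which also shows the two solutions are distinct there.

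The only genuinely delicate step is the passage from ``completely monotone'' to ``strictly log-convex''; I would either justify it as above (Bernstein representation plus the known asymptotics, excluding a point-mass spectral measure) or, alternatively, invoke Pollard's spectral representation $E_\alpha(-t^\alpha)=\int_0^\infty e^{-ts}\,d\nu_\alpha(s)$ with $\nu_\alpha$ an absolutely continuous probability measure, not concentrated at a point. Everything else — the reduction to $g$, the monotone upper bound, and the convex-combination estimate — is elementary. Note also the consistency check $\alpha=1$: there $g(t)=e^{-t/\tau_0}$ is a pure exponential, $h$ is affine, both inequalities degenerate to equalities $\sigma_{\rm F}\equiv\sigma_{\rm R}$, in agreement with the strict hypothesis $\alpha\in(1/2,1)$.
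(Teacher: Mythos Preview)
Your proof is correct. For the upper bound you and the paper do the same thing: strict monotonicity of $g(t)=E_\alpha[-(t/\tau_0)^\alpha]$ gives $g(t+c)<g(t)$, hence $\sigma_R(t)<(k_0/\sigma_0)\,\sigma_F(t)$, immediately. For the lower bound, however, your route is genuinely different. The paper argues via the large-$t$ asymptotics \eqref{asym}, computing
\[
\lim_{t\to\infty}\frac{\sigma_R(t)}{\sigma_F(t)}=\frac{k_0}{\sigma_0}\lim_{t\to\infty}\Bigl(1+\frac{c}{t}\Bigr)^{-\alpha}=\frac{k_0}{\sigma_0}>1,
\]
and, together with the monotone decrease of both functions, concludes $\sigma_F(t)<\sigma_R(t)$ for all $t>0$; this is brief, but the passage from the limiting ratio to the pointwise inequality at every finite $t$ is not justified as written. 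You instead prove the structural super-multiplicativity $g(t+c)>g(t)\,g(c)$ from \emph{strict log-convexity} of $g$, obtained via complete monotonicity, the Bernstein representation, and Cauchy--Schwarz (strict because for $\alpha<1$ the spectral measure is not a Dirac mass, as the power-law tail \eqref{asym} rules out a pure exponential). This delivers the inequality directly for every $t>0$, explains cleanly why the borderline $\alpha=1$ collapses to equality, and is fully rigorous. The only cosmetic point is that you establish $h''>0$ on $(0,\infty)$ while the convex-combination step uses the endpoint $0$; since $h=\log g$ is continuous at $0$ with $h(0)=0$, strict convexity extends to the closed interval (equivalently, $h(t+c)-h(c)-h(t)=\int_0^t[h'(u+c)-h'(u)]\,du>0$ because $h'$ is strictly increasing), so no harm is done.
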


\begin{proof}
    Because of the monotonicity of the Mittag-Leffler function, the requirement of $c>0$ implies $\sigma_0 < k_0$. Furthermore, for any $t>0$ we have that $\sigma_{\rm R} (t)$ and $\sigma_{\rm F} (t)$ are monotonically decreasing, and taking into account that for large $t$ are valid the \eqref{asym}, we have
    $$
    \lim_{t\rightarrow \infty} \frac{\sigma_R(t) }{\sigma_F(t)}=\lim_{t\rightarrow \infty} \frac{k_0}{\sigma_0}\left(1+\frac{c}{t}\right)^{-\alpha}= \frac{k_0}{\sigma_0} > 1.
    $$
    and hence $\sigma _{\rm F} (t) < \sigma _{\rm R} (t), \, \forall t>0$. Furthermore, again since $E_\alpha \big( - x^\alpha \big)$ is decreasing on $\mathbb{R}^+$ we have that
    $$
    \sigma _{\rm R} (t) = k _0 \, E_\alpha \left[-\left( \frac{t+c}{\tau_0}\right)^\alpha \right] <  k _0 \, E_\alpha \left[-\left( \frac{t}{\tau_0}\right)^\alpha \right] = \frac{k_0}{\sigma_0} \, \sigma _{\rm F} (t) \, ,
    $$
    which concludes the proof.
\end{proof}

\smallskip

In Fig.~\ref{Figura2} we plot the comparison between $\sigma_R(t)$ and $\sigma_F(t)$, as well the upper bound $\sigma^{ub} = k_0 \,\sigma_F/\sigma_0 $ for $\sigma_R(t)$ discussed in {\bf Theorem \ref{thm-2}}.
\begin{figure}
    \centering
    \includegraphics[width=0.49\linewidth]{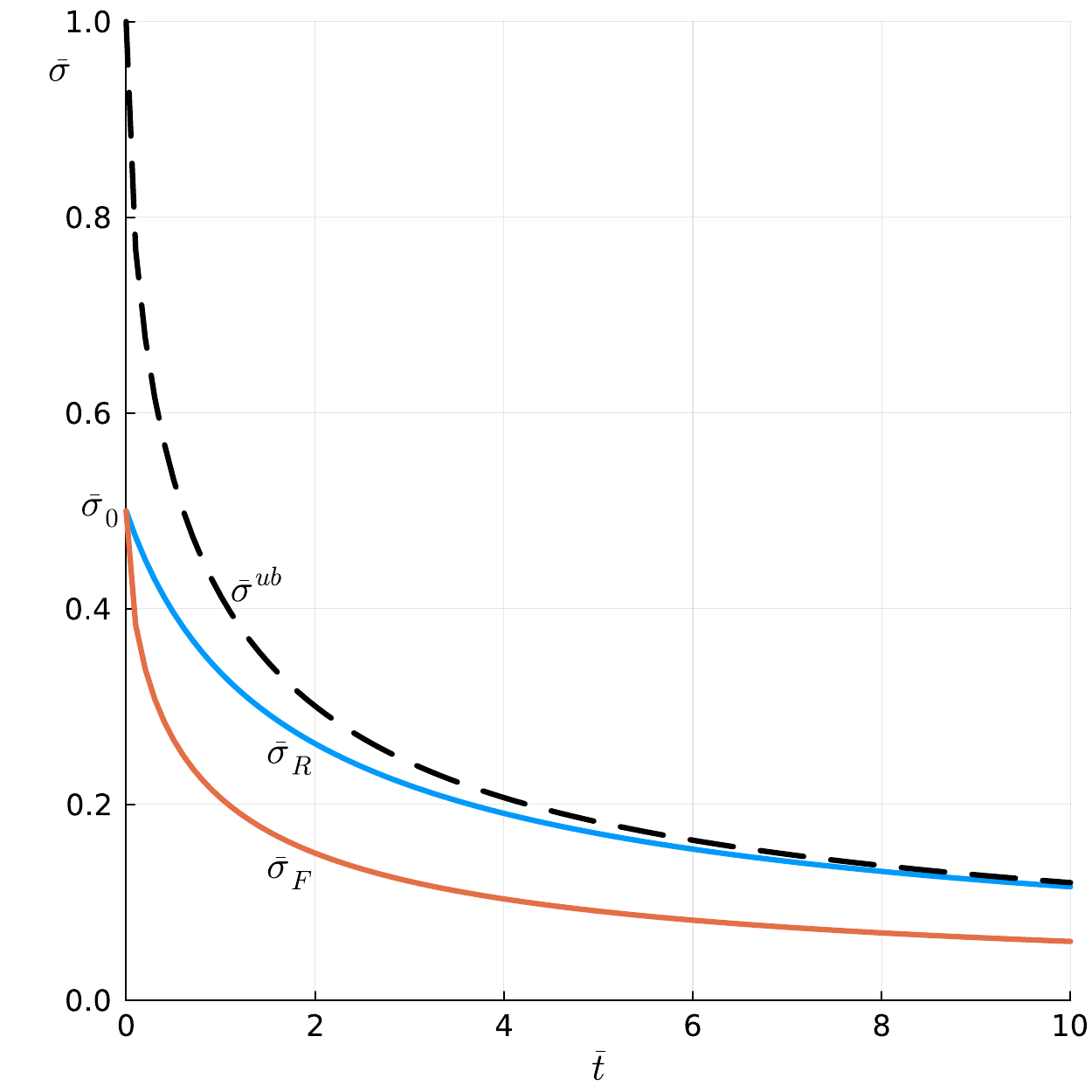}
    \caption{$\bar\sigma = \sigma/k_0$ and the dimensionles upper bound $\bar\sigma^{ub} = \left(k_0/\sigma_0\right) \sigma_F/k_0 = \sigma_F/\sigma_0$ as a function of $\bar t = t / \tau_0$, for $\alpha = 0.6$, $k_0 = 1$, $\sigma_0 = 1/2$.}
    \label{Figura2}
\end{figure}
\section{Discussion}
We have determined the  constitutive law for the viscous energy \eqref{thm-1} for the RET-improved theory of viscoelasticity, proposed in \cite{ViscoRuggeri},
such that the relaxation modulus of the fractional Maxwell model with order $\alpha \in (1/2, 1]$, {\em i.e.} \eqref{eq:fracrelax}, is contained within the solutions of the (non-linear) relaxation experiment \eqref{nonlineare}.

In {\bf Theorem \ref{thm-1}} we have shown that, given the viscous energy \eqref{teor1-1}, the solution of the non-linear relaxation experiment coincides with the one of the fractional Maxwell model of order $\alpha$ if and only if $\alpha \in (1/2, 1]$ and the fractional model provided that we choose $\sigma (0) = k_0$. The viscous energy associated to the non-linear model reproducing such a scenario is finite for all values of $\sigma \in [0, k_0]$, while the non-linear relaxation time $\tau (\varepsilon_0, \sigma)$ turns out to be positive and finite for all $\sigma \in (0,k_0)$, vanishing at $\sigma = k_0$, and diverging at $\sigma = 0$. This implies that the convexity condition is only satisfied for $\sigma \in (0,k_0)$.

The non-local relaxation experiment \eqref{nonlineare} also admits solutions other than the one associated to the fractional Maxwell model, given the same initial condition. Nonetheless, the relaxation modulus of the fractional Maxwell model of order $\alpha \in (1/2, 1)$ determines upper and lower bounds for the relaxation modulus of the RET-inspired model, as discussed in {\bf Theorem \ref{thm-2}}.

\bigskip

\noindent {\bf Acknowledgments}:
This work has been carried out in the framework of the activities of the Italian National Group of Mathematical Physics [Gruppo Nazionale per la Fisica Matematica (GNFM), Istituto Nazionale di Alta Matematica (INdAM)]. 
A.~Mentrelli is partially funded by the European Union -- NextGenerationEU under the National Recovery and Resilience Plan (PNRR) - Mission 4 \textit{Education and research}, Component 2 \textit{From research to business} -- Investment 1.1 Notice PRIN~2022 -- DD N.~104 dated 2/2/2022, entitled ``The Mathematics and Mechanics of Non-linear Wave Propagation in Solids'' (proposal code: 2022P5R22A; CUP: J53D23002350006), and by the Italian National Institute for Nuclear Physics (INFN), grant FLAG. A.~Giusti is grateful to the Department of Mathematics and the Alma Mater Research Center on Applied Mathematics (${{\cal A} {\cal M}}^2$) at the University of Bologna for hospitality during the completion of this work.

\smallskip


\begin{thebibliography}{99}
%
%
\bibitem{Christensen} R.~M.~Christensen, Theory of Viscoelasticity: An Introduction, 2nd ed., Reprinted by Dover, Mineola, NY, USA, 1982.

\bibitem{Amabili} M.~Amabili, Nonlinear Mechanics of Shells and Plates in Composite, Soft and Biological Materials, Cambridge University Press, 2018.

\bibitem{MainardiBook} F.~Mainardi, Fractional Calculus and Waves in Linear Viscoelasticity: An Introduction to Mathematical Models, World Scientific, 2nd ed., 2022.

\bibitem{ViscoRuggeri} T.~Ruggeri, A nonlinear approach to viscoelasticity via rational extended thermodynamics, Int. J. Non-Linear Mech. 160, 104658 (2024). 

\bibitem{RET} I.~M\"uller, T.~Ruggeri, Rational Extended Thermodynamics, 2nd ed., Springer, New York, NY, USA, 1998.

\bibitem{beyond} T.~Ruggeri, M.~Sugiyama, Rational Extended Thermodynamics Beyond the Monatomic Gas, Springer, Cham, Heidelberg, New York, Dordrecht, London, 2015.

\bibitem{newbook} T.~Ruggeri, M.~Sugiyama, Classical and Relativistic Rational Extended Thermodynamics of Gases, Springer, Cham, 2021.

\bibitem{MainardiSpada} F.~Mainardi, G.~Spada, Creep, relaxation and viscosity properties for basic fractional models in rheology, Eur. Phys. J. Spec. Top. 193 (2011) 133--160.

\bibitem{MainardiML} F.~Mainardi, Why the Mittag-Leffler function can be considered the Queen function of the fractional calculus?, Entropy 22 (2020) 1359. 

\bibitem{MLsBook} R.~Gorenflo, A.~A.~Kilbas, F.~Mainardi, S.~V.~Rogosin, Mittag-Leffler Functions, Related Topics and Applications, Springer Monographs in Mathematics, 1st ed. (2014).
%
%
\end{thebibliography}
\end{document}